\theoremstyle{plain}     
\newtheorem{thm}{Theorem}
\newtheorem{lem}{Lemma}
\theoremstyle{definition}
\theoremstyle{remark}
\begin{document}
\title{Energy-aware Adaptive Spectrum Access and Power Allocation in LAA Networks via Lyapunov Optimization}

%
\author{\IEEEauthorblockN{Yu Gu,
Qimei Cui,
Yue Wang,
Somayeh Soleimani
}
\IEEEauthorblockA{National Engineering Laboratory for Mobile Network\\
Beijing University of Posts and Telecommunications, Beijing, 100876, China \\
Email: cuiqimei@bupt.edu.cn, guyu@bupt.edu.cn}}
\maketitle

\begin{abstract}
To relieve the traffic burden and improve the system capacity, licensed-assisted access (LAA) has been becoming a promising technology to the supplementary utilization of the unlicensed spectrum. However, due to the densification of small base stations (SBSs) and the dynamic variety of the number of Wi-Fi nodes in the overlapping areas, the licensed channel interference and the unlicensed channel collision could seriously influence the Quality of Service (QoS) and the energy consumption. In this paper, jointly considering time-variant wireless channel conditions, dynamic traffic loads, and random numbers of Wi-Fi nodes, we address an adaptive spectrum access and power allocation problem that enables minimizing the system power consumption under a certain queue stability constraint in the LAA-enabled SBSs and Wi-Fi networks. The complex stochastic optimization problem is rewritten as the difference of two convex (D.C.) program in the framework of Lyapunov optimization, thus developing an online energy-aware optimal algorithm. We also characterize the performance bounds of the proposed algorithm with a tradeoff of $[O(1/V), O(V)]$ between power consumption and delay theoretically. The numerical results verify the tradeoff and show that our scheme can reduce the power consumption over the existing scheme by up to 72.1\% under the same traffic delay.
\end{abstract}

\section{Introduction}
Due to the explosive growth of mobile data stemming from the increasingly prevalence of smart handset devices, the scarcity of spectrum is becoming the bottleneck to boost more capacity of wireless communication \cite{caiwideband}. To improve the system capacity, a common trend has emerged with deploying additional low power nodes (LPNs, such as smallcells, femtocells), and improving the spectral utilization, such as Coordinated Multipoint (CoMP) \cite{Cui2017}. To fundamentally break through this predicament, an emerging technology using the unlicensed spectrum, called licensed-assisted access (LAA), has been launched into the standardization by Third Generation Partnership Project (3GPP) \cite{3GPP}.

There are three major challenges arising in the coexistence networks of LAA-enabled small base stations (SBSs) and Wi-Fi. The first challenge is how to guarantee the fair and effective coexistence between SBSs and WiFi. Due to the time-variant wireless channel conditions and the dynamic variety of the number of Wi-Fi nodes in the overlapping areas, SBS needs a dynamic mechanism to leverage the traffic between the licensed and unlicensed bands \cite{Cui2014}. Secondly, the random arrived traffic and the random access mechanism of LAA become a obstacle to guarantee QoS, which plays an important role in 5G networks. Finally, the new LAA procedures could also have impacts on energy consumption of SBSs due to the extra energy used for channel detection and packet collision.

As for the coexistence of SBSs and Wi-Fi, two kinds of specifications are proposed: frame-based mechanism (FBM) where SBS is activated at periodic cycles on unlicensed band, and load-based mechanism (LBM) where SBS competes for the unlicensed channel using listen-before-talk (LBT) and backoff procedure like Wi-Fi \cite{3GPP,ETSI}. \cite{Zhang2015a,Al-Dulaimi2015,Ratasuk2014} design coexistence mechanisms, such as an almost blank sub-frame (ABS) scheme, an interference avoidance scheme \cite{Zhang2015a}, and adaptive listen-before-talk (LBT) mechanism \cite{Zhang2015a,Ratasuk2014}. To improve the system throughput, \cite{Rupasinghe2015} proposes a Q-Learning based dynamic duty cycle selection technique for configuring LTE transmission gaps.

A few number of works have studied on QoS or energy efficiency (EE) requirements of SBS in the unlicensed band to data. \cite{Yin2016} designs an adaptive adjustment of backoff window size of LAA to minimize the collision probability of Wi-Fi users, satisfying the rate requirements of small cell users. \cite{Yin2016a} develops a power allocation algorithm to obtain pareto optimal between minimization of interference in the licensed band and collision in the unlicensed band, while satisfying the rate requirements of users. \cite{Chen2016} first investigates joint licensed and unlicensed resource allocations to maximize the EE through Nash bargaining when LAA systems adopt a FBM method.

However, \cite{Zhang2015a,Al-Dulaimi2015,Ratasuk2014,Rupasinghe2015,Almeida2013,Yin2016,Yin2016a,Chen2016} focus on static network models and do not fully consider time-varying environment. And most of works ignore the delay impact of LAA network. Therefore, this paper mainly investigates an energy-aware adaptive spectrum access and power allocation problem in coexistence of LAA-enabled SBSs and Wi-Fi networks, hinging on dynamic network model that reflects real network conditions. The main contributions of this paper are threefold.
\begin{itemize}
  \item We address an adaptive spectrum access and power allocation problem that enables minimizing the system average power consumption under a certain queue stability constraint in the LAA-enabled SBSs and Wi-Fi networks, in which the time-variant wireless channel conditions, dynamic traffic loads, and random numbers of Wi-Fi nodes are jointly considered.
  \item The stochastic optimization problem is rewritten as the difference of two convex (D.C.) program, and solved by using the successive convex approximation method in the framework of Lyapunov optimization, thus developing an online energy-aware optimal algorithm.
  \item The theoretical analysis and simulation results show that tuning the control parameter $V$ can quantitatively achieve a tradeoff of $[O(1/V),O(V)]$ between power consumption and delay. The proposed algorithm can reduce the power consumption over the existing scheme by up to 72.1\% under the same traffic delay.
\end{itemize}

The rest of the paper is organized as follows. In Section II, we introduce the system model. In Section III and Section IV, a stochastic optimization problem is formulated and an online energy-aware algorithm is developed based on the Lyapunov optimization. Finally, the numerical results are presented in Section V, and conclusions are given in Section VI.

\section{System Model}
\begin{figure}[!t]
\centering
\includegraphics[width=3.5in]{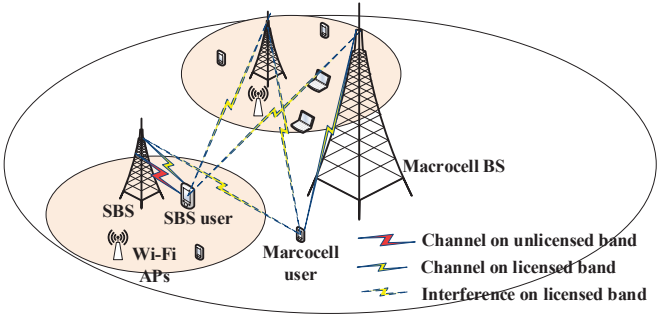}
\caption{System model for SBSs and Wi-Fi coexistence.}
\label{fig.1}
\end{figure}

Consider the downlink of a two-tier wireless network in a slotted system, indexed by  $t \in \{ 0,1,2,...\}$, in which $K$ SBSs share the licensed spectrum with one existing macrocell, and contend the available unlicensed spectrum with Wi-Fi nodes (i.e., Wi-Fi APs, Wi-Fi stations) by using LBT. Denote the set of BSs as $\mathcal{K} = \{ 0,1,2,...,K\}$. Without loss of generality, the marcocell BS is indexed by $0$ and SBSs by $1,2,...,K$. We assume that each SBS works on non-overlapping unlicensed channel. Thus, there is no interference among the SBSs in the unlicensed band. Nevertheless, in the coverage of $k$-th SBS, there are ${N_k}(t)$ Wi-Fi nodes at $t$-th time slot, contending the unlicensed band with $k$-th SBS. With ${N_k}(t)$ varying, the unlicensed band experiences various collisions.

There are ${S_k}$ cellular users in the $k$-th SBS, where $\mathcal{S}_k = \{ 1,2,...S_k\} $  collects the indexes of the users. Further, data packets arrive randomly in every slot and are queued separately for transmission to each user. Let $\mathbf{Q}(t) = \{ {Q_{{s_k}}}(t),\forall {s_k} \in {\mathcal{S}_k},\forall k \in \mathcal{K}\} $ be the queue length vector, where ${Q_{{s_k}}}(t)$ is the queue length of user ${s_k}$ at slot $t$. Let $\mathbf{A}(t) = \{ {A_{{s_k}}}(t),\forall {s_k} \in {\mathcal{S}_k},\forall k \in \mathcal{K}\} $ be the arrival data length vector, where ${A_{{s_k}}}(t)$ is the new traffic arrival amount of user ${s_k}$ at slot $t$. The queues $\mathbf{Q}(t)$ are assumed to be initially empty.

Let $\mathcal{L} = \{ 1,2,...L\}$ and $\mathcal{W} = \{ 1,2,...W\}$ collect the indexes of all the licensed and unlicensed OFDM subcarriers, respectively. We denote the bandwidth of each subcarrier as ${B}$. We denote the licensed and unlicensed subcarrier assignment indicator variables as $x_c^{(k,l,{s_k})}(t)$ and $x_u^{(k,w,{s_k})}(t)$, respectively. Let $p_c^{(k,l,{s_k})}(t)$ and $g_c^{(k,l,{s_k})}(t)$ be the transmit power and the channel gain form the  $k$-th SBS to ${s_k}$-th user on licensed subcarrier $l$ at slot $t$, respectively. Let $p_u^{(k,w,{s_k})}(t)$ and $g_u^{(k,w,{s_k})}(t)$ be the transmit power and the channel gain form the $k$-th SBS to $s_k$-th user on unlicensed subcarrier $w$ at slot $t$, respectively. Denote ${{\bf{x}}_c}(t){\rm{ = (}}x_c^{(k,l,{s_k})}(t){\rm{)}}$, ${{\bf{x}}_u}(t){\rm{ = (}}x_u^{(k,w,{s_k})}(t){\rm{)}}$, and ${\bf{x}}(t) = [{{\bf{x}}_c}(t),{{\bf{x}}_u}(t)]$.  Denote ${{\bf{p}}_c}(t){\rm{ = (}}p_c^{(k,l,{s_k})}(t){\rm{)}}$, ${{\bf{p}}_u}(t){\rm{ = (}}p_u^{(k,w,{s_k})}(t){\rm{)}}$, and ${\bf{p}}(t) = [{{\bf{p}}_c}(t),{{\bf{p}}_u}(t)]$.

\subsection{Transmission rate and power consumption on the licensed band}
The achievable transmission rate of user $s_k$ on the licensed subcarrier $l$ at SBS $k$ at slot $t$, can be given by
\begin{equation}\label{1}
\begin{array}{l}
R_c^{(k,l,{s_k})}(t)\\
= B{\log _2}\left( {1 + \frac{{x_c^{(k,l,{s_k})}(t)p_c^{(k,l,{s_k})}(t)g_c^{(k,l,{s_k})}(t)}}{{\sum\limits_{j \ne k} {x_c^{(j,l,{s_j})}(t)p_c^{(j,l,{s_j})}(t)g_c^{(j,l,{s_k})}(t)}  + {\sigma ^2}}}} \right)
\end{array},
\end{equation}
where ${\sigma ^2}$ is the additive white Gaussian noise (AWGN) power. Meanwhile, it is noteworthy that we need to guarantee the rate of Macrocell's users by imposing a threshold on the cross-tier interference ${I_M}$, which is given as follows
\begin{equation}\label{interference}
\sum\limits_{k \ne 0} {x_c^{(k,l,{s_k})}(t)p_c^{(k,l,{s_k})}(t)g_c^{(k,l,{s_0})}(t)}  \le {I_M}.
\end{equation}

And the transmission power consumption of SBS $k$ on licensed band is
\begin{equation}\label{2}
PC_c^{(k)}(t) = {\xi _c}\sum\limits_{l \in \mathcal{L}} {\sum\limits_{{{s_k}} \in \mathcal{S}_k} {x_c^{(k,l,{s_k})}(t)p_c^{(k,l,{s_k})}(t)} },
\end{equation}
where ${\xi _c}$ is a constant that accounts for the inefficiency of the power amplifiers on licensed band \cite{Li2016}.

\subsection{Transmission rate and power consumption on the unlicensed band}
To guarantee the coexistence with Wi-Fi systems, we assume that SBS adopts an adaptive backoff scheme to access the unlicensed channel, like Wi-Fi. The $k$-th SBS has a attempt transmission probability ${{\tau}_{l,k}}$ and a collision probability ${p_{l,k}}$. All the Wi-Fi nodes within the coverage of the $k$-th SBS are assumed to experience a same attempt transmission probability ${{\tau}_{w,k}}$ and a collision probability ${p_{w,k}}$ in the time slot. The attempt probability of Wi-Fi nodes for given collision probability ${{p}_{w,k}}$ is given by \cite{Kumar2007}
\begin{equation}\label{3}
{{\tau }_{w,k}}(t)=\frac{1+{{p}_{w,k}}+\cdots +p_{w,k}^{{{K}_{w}}-1}}{{{b}_{0}}+{{p}_{w,k}}{{b}_{1}}+\cdots +p_{w,k}^{K{}_{w}-1}{{b}_{{{K}_{w}}-1}}},
\end{equation}
where ${{b}_{j}}$ is the mean backoff time of stage $j$ and ${{K}_{w}}$ is the maximum number of retransmissions for Wi-Fi. The attempt probability of SBSs on unlicensed band is
\begin{equation}
{{\tau }_{l,k}}(t)=\frac{1+{{p}_{l,k}}+\cdots +p_{l,k}^{{{K}_{l}}-1}}{{{e}_{0}}+{{p}_{l,k}}{{e}_{1}}+\cdots +p_{l,k}^{{{K}_{l}}-1}{{e}_{{{K}_{l}}-1}}},
\end{equation}
where ${{e}_{j}}$ is the mean backoff time of stage $j$ and ${{K}_{l}}$ is the maximum number of retransmissions for Wi-Fi.
With the slotted model for the backoff process and the decoupling assumption \cite{Kumar2007}, the collision probabilities of SBSs and WiFi nodes are expressed by respectively
\begin{equation}
{{p}_{w,k}}(t)=1-{{(1-{{\tau }_{w,k}}(t))}^{{{N}_{k}}(t)-1}}(1-{{\tau }_{l,k}}(t)),
\end{equation}
\begin{equation}
{{p}_{l,k}}(t)=1-{{(1-{{\tau }_{w,k}}(t))}^{{{N}_{k}}(t)}}.
\end{equation}
According to Brouwer's fixed point theorem \cite{Kumar2007}, there exists a fixed point for the equations (4)-(7). Hence, we can obtain the attempt transmission probability and the collision probability of SBS and Wi-Fi nodes, respectively.

Then, the successful transmission probability for the $k$-th SBS on unlicensed channel can be given by
\begin{equation}
P_{suc}^{(k)}(t)={{\tau }_{l}}(t){{(1-{{\tau }_{w}}(t))}^{{{N}_{k}}(t)}}.
\end{equation}
Since the time slot of one LTE frame (i.e., 10 ms) is much larger than the Wi-Fi time slot (in the order of $\mu \text{s}$), the time fraction occupied by the SBS on unlicensed channel can be represented by ${{P}^{(k)}_{suc}}(t)$ \cite{Yin2016a}.

Therefore, the achievable transmission rate for user ${{s}_{k}}$ at SBS $k$ on the $w$-th unlicensed subcarrier can be written as
\begin{equation} \label{4}
\begin{array}{l}
R_u^{(k,w,{s_k})}(t)\\
 = P_{suc}^{(k)}(t)B{\log _2}(1 + \frac{{x_u^{(k,w,{s_k})}(t)p_u^{(k,w,{s_k})}(t)g_u^{(k,w,{s_k})}(t)}}{{{\sigma ^2}}}).
\end{array}
\end{equation}

And, the transmission power consumption of SBS $k$ on the unlicensed subcarrier is given by
\begin{equation}
PC_u^{(k)}(t)={\xi _u}\left( {\sum\limits_{w \in \mathcal{W}} {\sum\limits_{{{s_k}} \in \mathcal{S}_k} {x_u^{(k,w,{s_k})}(t)p_u^{(k,w,{s_k})}(t)} } } \right),
\end{equation}
where ${{\xi }_{u}}$ is a constant that accounts for the inefficiency of the power amplifiers on unlicensed band.

\subsection{Total Transmission rate and power consumption of SBSs}
According to (\ref{1}) and (\ref{4}), the achievable transmission data rate for user ${{s}_{k}}$ at SBS $k$ is given by
\begin{equation}
{R^{(k,{s_k})}}(t) = \sum\limits_{l \in \mathcal{L}} {R_c^{(k,l,{s_k})}(t)}  + \sum\limits_{w \in \mathcal{W}} {R_u^{(k,w,{s_k})}(t)}.
\end{equation}

The total transmit rate and the power consumption of SBSs are represented by respectively
\begin{equation}
{R_{tot}}(t) = \sum\limits_{k \in {\cal K}\backslash \{ 0\} } {\sum\limits_{{s_k}} {{R^{(k,{s_k})}}(t)} },
\end{equation}
\begin{equation}
{{PC}_{tot}}(t)=\sum\limits_{k \in \mathcal{K}\backslash \{ 0\}}{\left( P{{C}_{\text{static}}}+PC_{c}^{(k)}(t)+PC_{u}^{(k)}(t) \right)},
\end{equation}
where $P{{C}_{\text{static}}}$ is the static power, consisting of baseband signal processing and additional circuit blocks. Furthermore, we define the average power consumption and the transmit rate of the entire system as
\begin{equation}
{{\overline{PC}}_{tot}}=\underset{t\to \infty }{\mathop{\lim }}\,\frac{1}{t}\sum\limits_{\tau =0}^{t-1}{\mathbb{E}\{P{{C}_{tot}}(\tau )\}},
\end{equation}
\begin{equation}
{{\bar R}_{tot}} = \mathop {\lim }\limits_{t \to \infty } {\mkern 1mu} \frac{1}{t}\sum\limits_{\tau  = 0}^{t - 1} {{\mathbb{E}}\{ {R_{tot}}(\tau )\} }.
\end{equation}
\section{Problem Formulation}
In this section, we process to a stochastic optimization problem to minimize the average power consumption of SBSs, by joint optimizing the licensed and unlicensed subcarriers and power. To guarantee all arrived data leaving the buffer in a finite time, we introduce a concept of queue stability.

The data queue ${{Q}_{{{s}_{k}}}}(t)$ is given by
\begin{equation} \label{queue}
{{Q}_{{{s}_{k}}}}(t+1)\text{=}\max [{{Q}_{{{s}_{k}}}}(t)-{{R}^{(k,{{s}_{k}})}}(t),0]+{{A}_{{{s}_{k}}}}(t),
\end{equation}

And, a queue ${{Q}_{{{s}_{k}}}}(t)$ is strongly stable \cite{neely2010stochastic} if
\begin{equation}
{{\bar{Q}}_{{{s}_{k}}}}=\underset{t\to \infty }{\mathop{\lim }}\,\frac{1}{t}\sum\limits_{\tau =0}^{t-1}{E\{\left| {{Q}_{{{s}_{k}}}}(\tau ) \right|\}}<\infty.
\end{equation}

As a result, the problem can be formulated as follows
\begin{equation}
\begin{aligned}
&P1:\mathop {{\rm{minimize}}}\limits_{{\bf{x}}(t),{\bf{p}}(t)} {\rm{ }}{\overline {PC} _{tot}}\\
&C1:{{\bar{Q}}_{{{s}_{k}}}}=\underset{t\to \infty }{\mathop{\lim }}\,\frac{1}{t}\sum\limits_{\tau =0}^{t-1}{E\{\left| {{Q}_{{{s}_{k}}}}(\tau ) \right|\}}<\infty,\\
&C2:\begin{array}{*{20}{l}}
{{\sum\limits_w {\sum\limits_{{s_k}} {x_u^{(k,w,{s_k})}(t)p_u^{(k,w,{s_k})}(t)} } } }\\
{{\rm{ + }}\sum\limits_l {\sum\limits_{{s_k}} {x_c^{(k,l,{s_j})}(t)p_c^{(k,l,{s_k})}(t)} }  \le {P_{total}}}
\end{array},\\
&C3: {\sum\limits_w {\sum\limits_{{s_k}} {x_u^{(k,w,{s_k})}(t)p_u^{(k,w,{s_k})}(t)} } } \le {P_u},\\
&C4:\sum\limits_{k \ne 0} {x_c^{(k,l,{s_k})}(t)p_c^{(k,l,{s_k})}(t)g_c^{(k,l,{s_0})}(t)}  \le {I_M},\\
&C5:\sum\limits_{{s_k}} {x_c^{(k,l,{s_j})}(t)}  \le 1,\sum\limits_{{s_k}} {x_u^{(k,w,{s_k})}(t)}  \le 1,\\
&C6: p_c^{(k,l,{s_k})}(t) \ge 0, p_u^{(k,w,{s_k})}(t) \ge 0,\\
&C7: x_c^{(k,l,{s_k})}(t) \in \{ 0,1\},x_u^{(k,w,{s_k})}(t) \in \{ 0,1\}.\\
\end{aligned}
\end{equation}
where $\{{p_c^{(k,l,{s_k})}(t)}\}$, $\{{p_u^{(k,w,{s_k})}(t)}\}$, $\{{x _c^{(k,l,{s_k})}(t)}\}$ and $\{{x _u^{(k,w,{s_k})}(t)}\}$ are variables. C1 is the queue stability constraint to guarantee all arrived data leaving the buffer in a finite time. C2 is the total transmission power constraint on both the licensed and unlicensed bands, while C3 is the transmission power constraint on the unlicensed bands due to the regulations \cite{3GPP}. C4 can restrict the interference arising from SBSs. C5 and C7 guarantee that each subcarrier of the SBS has been used at most by one user.

\section{An Online Energy-Aware Algorithm via Lyapunov Optimization}
We can exploit the drift-plus-penalty algorithm \cite{Yu2016} to solve the stochastic optimization problem P1. First, we introduce some necessary but pratical boundedness assumptions to derive the drift-plus-penalty expression of P1. We assume the following inequalities
\begin{equation}\label{boundness1}
\mathbb{E}\left\{ {{A_{{s_k}}}{{(t)}^2}} \right\} \le \psi ,{k \in {\cal K}\backslash \{ 0\} },\forall {s_k},
\end{equation}
\begin{equation}\label{boundness2}
\mathbb{E}\left\{ {{R_{{s_k}}}{{(t)}^2}} \right\} \le \psi ,{k \in {\cal K}\backslash \{ 0\} },\forall {s_k},
\end{equation}
hold for some finite constant $\psi$. In addition, ${{PC_{tot}}\left( t \right)}$ and ${{R_{tot}}\left( t \right)}$ are bounded respectively by
\begin{equation}\label{boundness3}
{P_{\min }} \le \mathbb{E}\left\{ {{PC_{tot}}\left( t \right)} \right\} \le {P_{\max }},
\end{equation}
\begin{equation}\label{boundness4}
{R_{\min }} \le \mathbb{E}\left\{ {{R_{tot}}\left( t \right)} \right\} \le {R_{\max }},
\end{equation}
where ${P_{\min }}$, ${P_{\max }}$, ${R_{\min }}$, ${R_{\max }}$ are some finite constants.
Define the Lyapunov function as \cite{Yu2016}
\begin{equation}\label{lyapunov function}
L\left( {{\bf{Q}}\left( {{t}} \right)} \right) = \frac{1}{2}\sum\limits_{k \in {\cal K}\backslash \{ 0\} } {\sum\limits_{{s_k}} {{{\left( {{Q_{{s_k}}}\left( t \right)} \right)}^2}} }.
\end{equation}
Then the one-slot conditional Lyapunov drift can be expressed as
\begin{equation}\label{lyapunov drift}
\Delta \left( {{\bf{Q}}\left( {{t}} \right)} \right) = \mathbb{E}\left\{ {L\left( {{\bf{Q}}\left( {{{t + 1}}} \right)} \right) - L\left( {{\bf{Q}}\left( {{t}} \right)} \right)|{\bf{Q}}}(t) \right\}.
\end{equation}

Thus, the drift-plus-penalty expression of P1 is defined as
\begin{equation}\label{drift-plus-penalty}
{V}\mathbb{E}\left( {{PC_{tot}}\left( t \right)|{\bf{Q}}}(t) \right) + \Delta \left( {{\bf{Q}}}(t) \right),
\end{equation}
where $V$ is a control parameter. The following lemma 1 provides the upper bound of the drift-plus-penalty expression.
\begin{lem}\label{lemma 1}
Assume link condition is i.i.d over slots. Under any power allocation algorithm, all parameter $V \ge 0$, and all possible queue length $\mathbf{Q}$, the drift-plus-penalty satisfies the following inequality:
\begin{equation}\label{inequality1}
\begin{aligned}
&{V}\mathbb{E}\left( {{PC_{tot}}\left( t \right)|{\bf{Q}}} \right)+ \Delta \left( {{\bf{Q}}} \right)\le C_0 + {\rm{V}}\mathbb{E}\left( {{PC_{tot}}\left( t \right)|{\bf{Q}}} \right) \\
& + \sum\limits_{k \in {\cal K}\backslash \{ 0\} } {\sum\limits_{{s_k}} {{Q_{{s_k}}}} \left( t \right)\left( {{A_{{s_k}}}\left( t \right) - {R_{{s_k}}}\left( t \right)|{\bf{Q}}} \right)}\\
\end{aligned}
\end{equation}
where $C_0$ is a positive constant, satisfying for all $t$
\begin{equation}\label{B}
C_0 \ge \frac{1}{2}\sum\limits_{k \in {\cal K}\backslash \{ 0\} } {\sum\limits_{{s_k}} {{\mathbb{E}}\left( {{A_{{s_k}}}{{\left( t \right)}^2}{\rm{ + }}{R_{{s_k}}}{{\left( t \right)}^2}|{\bf{Q}}} \right)} }.
\end{equation}
\end{lem}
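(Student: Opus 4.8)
The plan is to follow the standard drift argument of \cite{neely2010stochastic,Yu2016}: square the per-user queue recursion \eqref{queue}, sum over users and SBSs to bound the one-slot change of the quadratic Lyapunov function \eqref{lyapunov function}, take the conditional expectation given $\mathbf{Q}(t)=\mathbf{Q}$, and finally add the weighted penalty $V\,\mathbb{E}(PC_{tot}(t)\mid\mathbf{Q})$ to both sides. The only analytic ingredient is the elementary scalar inequality
\[
\left(\max[q-b,0]+a\right)^2 \le q^2 + a^2 + b^2 + 2q(a-b),\qquad q,a,b\ge 0,
\]
which holds because $0\le\max[q-b,0]\le q$ and $\max[q-b,0]^2\le(q-b)^2$, so that after expanding the square and bounding the cross term by $2a\max[q-b,0]\le 2aq$ the claim follows. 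I would apply it with $q=Q_{s_k}(t)$, $a=A_{s_k}(t)$, and $b=R_{s_k}(t)$ (the achievable rate $R^{(k,s_k)}(t)$), which is nonnegative, so the hypotheses are met.

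Applying this to \eqref{queue} for every $s_k$ and every $k\in\mathcal{K}\backslash\{0\}$, then summing and dividing by two, gives
\begin{align*}
L(\mathbf{Q}(t+1)) - L(\mathbf{Q}(t)) &\le \frac{1}{2}\sum_{k\in\mathcal{K}\backslash\{0\}}\sum_{s_k}\left(A_{s_k}(t)^2 + R_{s_k}(t)^2\right)\\
&\quad + \sum_{k\in\mathcal{K}\backslash\{0\}}\sum_{s_k} Q_{s_k}(t)\left(A_{s_k}(t)-R_{s_k}(t)\right).
\end{align*}
Taking the conditional expectation given $\mathbf{Q}(t)=\mathbf{Q}$ and using the definition \eqref{lyapunov drift} of $\Delta(\mathbf{Q})$ yields
\begin{align*}
\Delta(\mathbf{Q}) &\le \frac{1}{2}\sum_{k\in\mathcal{K}\backslash\{0\}}\sum_{s_k}\mathbb{E}\left(A_{s_k}(t)^2 + R_{s_k}(t)^2\mid\mathbf{Q}\right)\\
&\quad + \sum_{k\in\mathcal{K}\backslash\{0\}}\sum_{s_k} Q_{s_k}(t)\,\mathbb{E}\left(A_{s_k}(t)-R_{s_k}(t)\mid\mathbf{Q}\right),
\end{align*}
where $Q_{s_k}(t)$ was pulled out of the expectation since it is $\mathbf{Q}$-measurable. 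By the second-moment bounds \eqref{boundness1}--\eqref{boundness2}, the first double sum is at most a finite constant independent of $t$ and $\mathbf{Q}$; naming any such bound $C_0$ reproduces exactly \eqref{B}, and adding $V\,\mathbb{E}(PC_{tot}(t)\mid\mathbf{Q})$ to both sides of the resulting drift bound gives \eqref{inequality1}, since that penalty term reappears unchanged on the right-hand side.

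I do not expect a genuine obstacle here; the step warranting the most care is the treatment of the projection $\max[\cdot,0]$ in the queue update -- deriving the scalar inequality above while keeping the correct sign on the $-2qb$ term, and observing that it is legitimate to carry the offered rate $R_{s_k}(t)$ rather than the actually served amount $\min[Q_{s_k}(t),R_{s_k}(t)]$, since replacing the latter by the former only loosens the bound. A secondary point is to confirm that the i.i.d.\ channel assumption together with \eqref{boundness1}--\eqref{boundness2} makes the conditional second moments finite and uniform in $t$ and $\mathbf{Q}$, so that the constant $C_0$ in \eqref{B} is well defined for all $t$ as stated.
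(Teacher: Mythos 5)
Your proposal is correct and follows essentially the same route as the paper's proof: square the queue recursion \eqref{queue}, apply the scalar inequality $\left(\max[q-b,0]+a\right)^2 \le q^2+a^2+b^2+2q(a-b)$, sum over users and SBSs, and identify the result with $\Delta(\mathbf{Q})$ before adding the penalty term. If anything, your write-up is slightly more careful than the paper's, since you justify the scalar inequality and take the conditional expectation explicitly rather than equating an un-averaged sum with $\Delta(\mathbf{Q})$.
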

\begin{proof}
Squaring both side of (\ref{queue}) and exploiting the inequality
\begin{equation}\label{inequality2}
{\left\{ {\max \left[ {Q - R} \right] + A} \right\}^2} \le {Q^2}{\rm{ + }}{R^2} + {A^2} - 2Q\left( {R - A} \right),
\end{equation}
we can get
\begin{equation}\label{inequality3}
\begin{aligned}
{\left[ {{Q_{{s_k}}}\left( {t + 1} \right)} \right]^2} & \le {\left[ {{Q_{{s_k}}}\left( t \right)} \right]^2}{\rm{ + }}{\left[ {{A_{{s_k}}}\left( t \right)} \right]^2}{\rm{ + }}{\left[ {{R_{{s_k}}}\left( t \right)} \right]^2}\\
&- 2{Q_{{s_k}}}\left( t \right)\left( {{R_{{s_k}}}\left( t \right) - {A_{{s_k}}}\left( t \right)} \right).\\
\end{aligned}
\end{equation}
Summarizing over ${s_k}$, we have
\begin{equation}\label{inequality4}
\small{\begin{array}{l}
\frac{{\sum\limits_{k \in {\cal K}\backslash \{ 0\} } {\left( {\sum\limits_{{s_k}} {{Q_{{s_k}}}{{\left( {t + 1} \right)}^2}}  - \sum\limits_{{s_k}} {{Q_{{s_k}}}{{\left( t \right)}^2}} } \right)} }}{2} \le \frac{{\sum\limits_{k \in {\cal K}\backslash \{ 0\} } {\sum\limits_{{s_k}} {\left( {{A_{{s_k}}}{{\left( t \right)}^2}{\rm{ + }}{R_{{s_k}}}{{\left( t \right)}^2}} \right)} } }}{2}\\
 - \sum\limits_{k \in {\cal K}\backslash \{ 0\} } {\sum\limits_{{s_k}} {{Q_{s_k}}\left( t \right)\left( {{R_{s_k}}\left( t \right) - {A_{s_k}}\left( t \right)} \right)} }
\end{array}}
\end{equation}
The left-hand-side of (\ref{inequality4}) equals to $\Delta \left( {{\bf{Q}}}(t) \right)$. Lemma 1 is proven.
\end{proof}

To push the objective P1 to its minimum, a proper power allocation algorithm is proposed to greedily minimize the drift-plus-penalty expression of P1. As a result, from the stochastic optimization theory, it is required to minimize the upper bound in (\ref{inequality1}) subject to the same constraints C2-C7 except the stability constraint C1. Therefore, the transformed problem P2 is given by

\begin{equation}
\begin{aligned}
P2: &{\min}~V \times {PC_{tot}}\left( t \right) - \sum\limits_{{s_k}} {{Q_{{s_k}}}} \left( t \right){R_{{s_k}}}\left( t \right)\\
&s.t. C2-C7.\\
\end{aligned}
\end{equation}
Unfortunately, the optimization is highly non-convex. Nevertheless, we can equivalently transform P2 to a D.C. program as discussed in the sequel.

For convenience's sake, we get rid of the slot index $t$ without ambiguity. It is noted that $\mathbf{x}$ is binary and the product term $\mathbf{x}\mathbf{p}$ is obviously non-convex, we can recast these constraints using the inequality $0 \le {\mathbf{p}} \le {\mathbf x}\Lambda$ \cite{Che2014}, where $\Lambda  > 0$ is a predefined constant. We can further transform the binary constraint C7 as the intersection of the following regions \cite{Cui2017a}
\begin{equation} \label{aaa40}
0 \le x_c^{(k,l,{s_j})} \le 1, 0 \le x_u^{(k,w,{s_k})} \le 1,
\end{equation}
\begin{equation} \label{aaa41}
\begin{array}{l}
\sum\limits_k {\sum\limits_l {\sum\limits_{{s_k}} {(x_c^{(k,l,{s_k})} - {{(x_c^{(k,l,{s_k})})}^2})} } } \\
 + \sum\limits_k {\sum\limits_w {\sum\limits_{{s_k}} {(x_u^{(k,w,{s_k})} - {{(x_u^{(k,w,{s_k})})}^2}) \le 0} } }.
\end{array}
\end{equation}

Although optimization variables $\mathbf{x}$ are continuous values, constraint (\ref{aaa41}) is non-convex. In order to deal with (\ref{aaa41}), we reformulate P2, as given by (\ref{long}), where $\lambda$ acts a penalty factor. It is proven that for sufficiently large values of $\lambda$, P3 can be equivalent to P2 \cite{Che2014}.
\newcounter{mytempeqncnt}
\begin{figure*}
\normalsize
\setcounter{mytempeqncnt}{\value{equation}}
\setcounter{equation}{33} 
\begin{equation} \label{long}
\begin{aligned}
P3: & {\rm{min}}V \times P{C_{tot}}\left( t \right) - \sum\limits_{{s_k}} {{Q_{{s_k}}}} \left( t \right){R_{{s_k}}}\left( t \right) + \lambda \sum\limits_k {\sum\limits_l {\sum\limits_{{s_k}} {(x_c^{(k,l,{s_k})} - {{(x_c^{(k,l,{s_k})})}^2})} } }  + \lambda \sum\limits_k {\sum\limits_w {\sum\limits_{{s_k}} {(x_u^{(k,w,{s_k})} - {{(x_u^{(k,w,{s_k})})}^2})} } } \\
s.t. & {\sum\limits_w {\sum\limits_{{s_k}} {p_u^{(k,w,{s_k})}} } }  + \sum\limits_l {\sum\limits_{{s_k}} {p_c^{(k,l,{s_k})}} }  \le {P_{total}},~ {\sum\limits_w {\sum\limits_{{s_k}} p_u^{(k,w,{s_k})}(t) } } \le {P_u},\\
&\sum\limits_{j \ne 0} {p_c^{(j,l,{s_j})}} g_c^{(j,l,{s_0})} \le {I_M},~p_c^{(k,l,{s_k})}(t) \le x_c^{(k,l,{s_k})}(t)\Lambda,~p_u^{(k,w,{s_k})}(t) \le x_u^{(k,w,{s_k})}(t)\Lambda,~ C5,~C6.\\
\end{aligned}
\end{equation}
\setcounter{equation}{\value{mytempeqncnt}}
\hrulefill 
\vspace*{4pt} 
\end{figure*}
\setcounter{equation}{34}
Define
\begin{equation}
\small{\begin{aligned}
 &f({\bf{P}},{\bf{x}}) =V \times P{C_{tot}} - \sum\limits_k {\sum\limits_l {\sum\limits_{{s_k}} {{Q_{{s_k}}}}} } \\
&B{\log _2}\left( {\sum\limits_k {p_c^{(k,l,{s_k})}g_c^{(k,l,{s_k})}}  + {\sigma ^2}} \right) - \sum\limits_k {\sum\limits_w {\sum\limits_{{s_k}} {{Q_{{s_k}}}R_u^{(k,w,{s_k})}} } } \\
&+ \lambda \sum\limits_k {\sum\limits_l {\sum\limits_{{s_k}} {(x_c^{(k,l,{s_k})})} } }  + \lambda \sum\limits_k {\sum\limits_w {\sum\limits_{{s_k}} {(x_u^{(k,w,{s_k})})} } },
\end{aligned}}
\end{equation}
\begin{equation}
\small{\begin{aligned}
g({\bf{P}},{\bf{x}}) &= - \sum\limits_k {\sum\limits_l {\sum\limits_{{s_k}} {{Q_{{s_k}}}B{{\log }_2}\left( {\sum\limits_{j \ne k} {p_c^{(j,l,{s_j})}g_c^{(j,l,{s_k})}}  + {\sigma ^2}} \right)} } } \\
 &+ \lambda \sum\limits_k {\sum\limits_l {\sum\limits_{{s_k}} {{{(x_c^{(k,l,{s_k})})}^2}}  + \lambda \sum\limits_k {\sum\limits_w {\sum\limits_{{s_k}} {{{(x_u^{(k,w,{s_k})})}^2}} } } } }
\end{aligned}}
\end{equation}

Since $f$ and $g$ are convex, the objective function is the difference of two convex functions, as given by $f-g$. As a result, P3 is a D.C. program. Therefore, we can apply successive convex approximation to obtain a local optimal solution of P3.

Let $i$ denote the iteration number. Since $g$ is convex, at the $i$-th iteration, we have
\begin{equation} \label{gyy}
\begin{array}{l}
f({\bf{p}},{\bf{x}}) - g\left( {{\bf{p}},{\bf{x}}} \right) \le f({\bf{p}},{\bf{x}}) - g\left( {{\bf{p}}(i - 1),{\bf{x}}(i{\rm{ - }}1)} \right)\\
 - {\nabla _{\bf{p}}}g\left( {{\bf{p}}(i - 1),{\bf{x}}(i{\rm{ - }}1)} \right)({\bf{p}} - {\bf{p}}(i - 1))\\
 - {\nabla _{\bf{x}}}g\left( {{\bf{p}}(i - 1),{\bf{x}}(i{\rm{ - }}1)} \right)({\bf{x}} - {\bf{x}}(i-1)),
\end{array}
\end{equation}
where ${{\bf{p}}(i-1)}$ and ${{\bf{x}}(i-1)}$ are the solutions of the problem at $(i-1)$-th iteration, and ${\nabla _{\bf{p}}}$ and ${\nabla _{\bf{x}}}$ are the gradient operation with respect to $\bf{p}$ and $\bf{x}$. As a result, P3 becomes a convex optimization problem, which can be efficiently solved by using standard convex optimization techniques, such as the interior-point method. Our proposed algorithm can be explicitly described in Algorithm 1.

\begin{algorithm}[t]
  \caption{Online Energy-Aware Spectrum Access and Power Allocation Algorithm}
  \label{Online Energy-Aware Resource Allocation Algorithm}
  \begin{algorithmic}[1]
  \STATE Initialize $\mathbf{p}{(0)}$ and $\mathbf{x}{(0)}$, and $t=0$.
  \STATE At the beginning of each slot $t$, acquire the current queue state $\mathbf{Q}(t)$ and the channel state $g_c^{(k,l,{s_k})}(t)$ and $g_u^{(k,w,{s_k})}(t)$, and obtain the number of Wi-Fi nodes $N_k(t)$ at SBS $k$.
  \REPEAT
  \STATE Optimize P3 to obtain optimal $\mathbf{p}(t)$ and $\mathbf{x}(t)$ by using a D.C. program.
  \UNTIL{convergence of $\mathbf{p}$ and $\mathbf{x}$}
  \end{algorithmic}
\end{algorithm}
\subsection{Performance analysis}
Since the system state per time slot is i.i.d., We can quantify the performance of our proposed algorithm, by means of Markovian randomness \cite{Yu2016}. Denote $PC_{tot}^*\left( t \right)$, $R_{{s_k}}^{\rm{*}}\left( t \right)$ as the optimal power consumption and the corresponding rate. If the boundness assumptions (\ref{boundness1})-(\ref{boundness4}) hold, there exists an i.i.d spectrum access and power allocation algorithm, satisfying
\begin{equation}\label{aa1}
\mathbb{E}\left( {R_{{s_k}}^*\left( t \right)} \right) \ge \mathbb{E}\left( {{A_{{s_k}}}\left( t \right)} \right) + \varepsilon,
\end{equation}
where $\varepsilon$ is a small positive value. The following Theorem reveals the performance bounds of average power and average delay of the proposed algorithm.
\begin{thm}\label{lemma 2}
Suppose the system state per slot time is i.i.d, the average power and average queue length of the proposed algorithm are bounded respectively by
\begin{equation}\label{tradeoff1}
\overline {{PC}}_{tot}  \le \frac{C_0}{V} + \overline {PC}_{tot}^*,
\end{equation}
\begin{equation}\label{tradeoff2}
\overline Q  \le \frac{{C_0 + V\overline {PC}_{tot}^* }}{\varepsilon },
\end{equation}
where $C_0$ and $\varepsilon$ are defined in (\ref{B}) and (\ref{aa1}), respectively.
\end{thm}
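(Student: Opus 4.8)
The plan is to run the standard drift-plus-penalty argument on top of Lemma~\ref{lemma 1}, using the stationary randomized policy guaranteed by (\ref{aa1}) as the comparison benchmark. First I would read the bound of Lemma~\ref{lemma 1} as an upper bound on $\Delta(\mathbf{Q}(t)) + V\mathbb{E}(PC_{tot}(t)\mid\mathbf{Q}(t))$ in terms of the decisions taken in slot $t$, and observe that Algorithm~1, by construction, picks $\mathbf{x}(t),\mathbf{p}(t)$ to minimize (a surrogate of) that right-hand side over all decisions feasible for C2--C7 in the current slot. Hence the value attained by the algorithm is no larger than the value obtained by plugging in the decisions $PC^*_{tot}(t), R^*_{s_k}(t)$ of the i.i.d. randomized policy of (\ref{aa1}). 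Since that policy is independent of the queue backlog, its conditional expectations coincide with the unconditional ones, giving $\Delta(\mathbf{Q}(t)) + V\mathbb{E}(PC_{tot}(t)\mid\mathbf{Q}(t)) \le C_0 + V\,\overline{PC}^{\,*}_{tot} + \sum_{k,s_k} Q_{s_k}(t)\big(\mathbb{E}(A_{s_k}(t)) - \mathbb{E}(R^{*}_{s_k}(t))\big)$.

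Second, I would invoke (\ref{aa1}), i.e. $\mathbb{E}(R^{*}_{s_k}(t)) \ge \mathbb{E}(A_{s_k}(t)) + \varepsilon$, so each backlog term contributes at most $-\varepsilon Q_{s_k}(t)$, yielding $\Delta(\mathbf{Q}(t)) + V\mathbb{E}(PC_{tot}(t)\mid\mathbf{Q}(t)) \le C_0 + V\,\overline{PC}^{\,*}_{tot} - \varepsilon\sum_{k,s_k} Q_{s_k}(t)$. Taking expectation over $\mathbf{Q}(t)$, using that $\mathbb{E}\{\Delta(\mathbf{Q}(t))\}$ telescopes as $\mathbb{E}[L(\mathbf{Q}(t+1))] - \mathbb{E}[L(\mathbf{Q}(t))]$, summing over $t = 0,\dots,T-1$, and recalling $L\ge 0$ and $\mathbf{Q}(0)=\mathbf{0}$ (so $L(\mathbf{Q}(0))=0$), I obtain $V\sum_{t=0}^{T-1}\mathbb{E}(PC_{tot}(t)) + \varepsilon\sum_{t=0}^{T-1}\sum_{k,s_k}\mathbb{E}(Q_{s_k}(t)) \le T\big(C_0 + V\,\overline{PC}^{\,*}_{tot}\big)$.

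Third, I would split this single inequality into the two claims. For (\ref{tradeoff1}), drop the nonnegative $\varepsilon$-backlog term, divide by $VT$, and take $\limsup_{T\to\infty}$ (matching the time-average definition of $\overline{PC}_{tot}$) to get $\overline{PC}_{tot} \le C_0/V + \overline{PC}^{\,*}_{tot}$. For (\ref{tradeoff2}), instead drop the nonnegative power term (using $PC_{tot}(t)\ge 0$, or $P_{\min}$ from (\ref{boundness3})), divide by $\varepsilon T$, and take $\limsup_{T\to\infty}$ to get $\frac{1}{T}\sum_{t=0}^{T-1}\sum_{k,s_k}\mathbb{E}(Q_{s_k}(t)) \le (C_0 + V\,\overline{PC}^{\,*}_{tot})/\varepsilon$, which is the stated bound on the aggregate average backlog $\overline{Q}$; the corresponding average-delay bound then follows from Little's law.

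The main obstacle, and the step deserving the most care, is the comparison in the first paragraph: one must justify that the decisions produced by Algorithm~1 — which only drives the D.C.\ surrogate P3 to a (local) optimum rather than solving the non-convex P2 exactly — still make the right-hand side of Lemma~\ref{lemma 1} no larger than its value under the benchmark randomized policy. Either one argues the successive-convex-approximation iterate is feasible and no worse than the benchmark on the drift-plus-penalty metric, or one states the theorem for the ideal per-slot minimizer; everything after that reduces to the routine telescoping above. A secondary point to handle correctly is that the benchmark policy's independence from $\mathbf{Q}(t)$ is precisely what permits replacing the conditional expectations by the constants $\overline{PC}^{\,*}_{tot}$, $\mathbb{E}(A_{s_k}(t))$ and $\mathbb{E}(R^{*}_{s_k}(t))$.
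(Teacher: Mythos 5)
Your proof is correct and is precisely the standard drift-plus-penalty argument that the paper invokes only by citation (the paper supplies no details for Theorem~\ref{lemma 2}, simply appealing to the stochastic optimization result of \cite{Yu2016}); your chain of comparison against the i.i.d.\ benchmark of (\ref{aa1}), telescoping of the drift, and splitting into the two bounds matches that standard route. The caveat you single out --- that Algorithm~1 only drives the D.C.\ surrogate P3 to a local optimum rather than exactly minimizing the right-hand side of Lemma~\ref{lemma 1}, so the comparison step strictly requires either the ideal per-slot minimizer or an extra argument --- is a genuine gap the paper itself leaves unaddressed, and your suggestion to state the theorem for the ideal minimizer is the appropriate repair.
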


Its proof uses a standard result in the stochastic optimization theory \cite{Yu2016}. Theorem 1 implies a tradeoff of $[O(1/V),O(V)]$ between power consumption and queue length (i.e., delay). In other word, by increasing control parameter $V$, the power consumption can converge to the optimal value but the traffic delay gets increasing.

\section{Simulation}
We conduct the simulation with the time slot length to be $10~ms$, and run each experiment for 5000 slots. There are $K=3$ SBSs, which each has $L=2$ licensed subcarriers and $W=4$ unlicensed subcarriers. We set SBS users and Wi-Fi nodes are uniformly distributed. And the arrival data packet of each users follows Poisson distribution. The channel gains of licensed and unlicensed bands follow the Rayleigh fading. Set the power amplifier of licensed and unlicensed bands as ${1 \mathord{\left/ {\vphantom {1 {{\xi _c} = }}} \right.  \kern-\nulldelimiterspace} {{\xi _c} = }}{1 \mathord{\left/ {\vphantom {1 {{\xi _u} = }}} \right. \kern-\nulldelimiterspace} {{\xi _u} = }}0.35$. Let $P_{total}$ be $46~dBm$, and $P_u$ be $23~dBm$. Set $P{{C}_{idle}}= 1~W$ and $P{{C}_{\text{static}}} = 9~W$.

\begin{figure}[!t]
\centering
\includegraphics[width=3in]{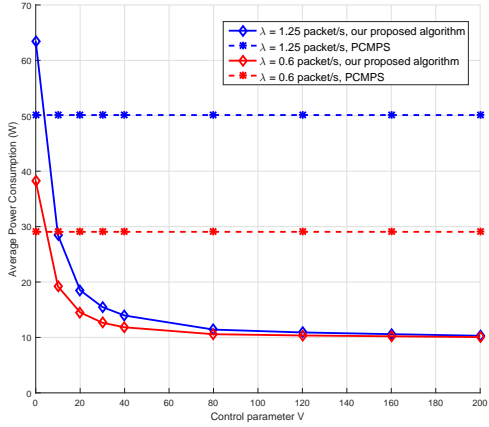}
\caption{Average power consumption versus control parameter $V$.}
\label{fig.1}
\end{figure}

\begin{figure}[!t]
\centering
\includegraphics[width=3in]{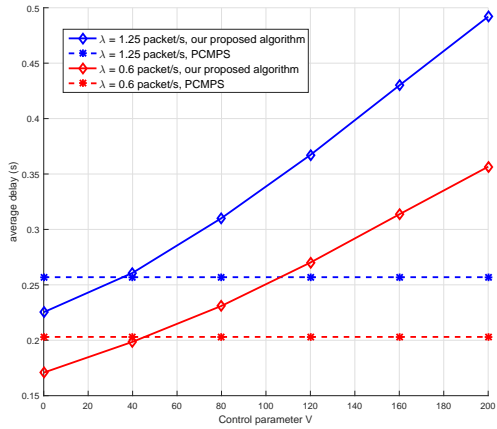}
\caption{Average delay versus control parameter $V$.}
\label{fig.1}
\end{figure}
We compare the proposed algorithm under different control parameter $V$ with a power consumption minimization per slot (PCMPS). The PCMPS minimizes the power consumption per slot, subject to C2-C7 and a new rate constraint ${R_{{s_k}}}(t) \ge {A_{{s_k}}}(t)$. The new constraint is added to guarantee the QoS of the users. In Fig. 2, we plot the total power consumption against $V$. It shows that when $V$ increases, the total power consumption of our proposed algorithm could decrease and converge to a point at the speed of $O(1/V)$ for any given traffic arrival rate $\lambda$. According to (\ref{tradeoff1}), the converged point is the optimal power consumption $\overline {PC}_{tot}^*$. And it is obviously observed that our proposed algorithm consumes less power than the PCMPS, when $V \geq 5$. This is because PCMPS ignores the queue states and always need to guarantee that the service rate is greater than arrival rates.
Fig. 3 shows the average traffic delay against $V$. As $V$ increases, the average traffic delay (or queue backlog) grows linearly in $O(V)$, which is consistent with (\ref{tradeoff2}).

Fig. 2 and Fig. 3 together show that we can achieve a tradeoff between power and delay. For example, if the network operator chooses $5 \leq V \leq 40$ for $\lambda {\rm{ = 1}}{\rm{.25}}$, the proposed algorithm outperforms the PCMPS in both the power and delay. In particular, the proposed algorithm can reduce the power consumption over PCMPS scheme by up to 72.1\% under the same traffic delay. A balance between the licensed channel interference and the unlicensed channel collision can also be achieved by the proposed algorithm.
\section{Conclusion}
In this paper, we have formulated a stochastic optimization to minimize the system average power consumption in the stochastic LAA-enabled SBSs and Wi-Fi networks, by jointly optimizing subcarrier assignment and power allocation between the licensed and unlicensed band. In the framework of Lyapunov optimization, an online energy-aware algorithm is developed. The theoretical analysis and simulation results show that our proposed algorithm can give a practical control and balance between power consumption and delay.

\section*{Acknowledgment}
The work was supported by National Nature Science Foundation of China Project (Grant No. 61471058), Hong Kong, Macao and Taiwan Science and Technology Cooperation Projects (2014DFT10320, 2016YFE0122900), the 111 Project of China (B16006) and Beijing Training Project for The Leading Talents in S\&T (No. Z141101001514026).
%


\end{document}